\let\doendproof\endproof
\renewcommand\endproof{~\hfill$\qed$\doendproof}
\newcommand{\opt}[2]{\ensuremath{\mathsf{OPT}_{#1}(#2)}}
\newcommand{\D}{\ensuremath{P}}
\newcommand{\V}{\ensuremath{V}}
\newcommand{\diam}[1]{diam({#1})}
\newcommand{\VD}[1]{\text{VD}({#1})}
\newcommand{\DT}[1]{\text{DT}({#1})}
\newcommand{\PB}{\textrm{PB}}
\newcommand{\optL}{\textsc{Lnr}}
\newcommand{\optS}{\textsc{Slb}}
\newcommand{\UpGamma}{\mathrm{\Gamma}}
\newcommand{\UpPi}{\mathrm{\Pi}}
\begin{document}

\title{The Polygon Burning Problem}

\author{William Evans \and Rebecca Lin}

\authorrunning{W. Evans and R. Lin}

\institute{
University of British Columbia, Vancouver, Canada\\
\email{will@cs.ubc.ca, ryelin@student.ubc.ca}}

\maketitle          

%---------------
%   ABSTRACT
%---------------

\begin{abstract}

Motivated by the $k$-center problem in location analysis, we consider the \emph{polygon burning} (PB) problem: Given a polygonal domain $\D$ with $h$ holes and $n$ vertices, find a set $S$ of $k$ vertices of $\D$ that minimizes the maximum geodesic distance from any point in $\D$ to its nearest vertex in $S$. Alternatively, viewing each vertex in $S$ as a site to start a fire, the goal is to select $S$ such that fires burning simultaneously and uniformly from $S$, restricted to $\D$, consume $\D$ entirely as quickly as possible. We prove that PB is NP-hard when $k$ is arbitrary. We show that the discrete $k$-center of the vertices of $\D$ under the geodesic metric on $\D$ provides a $2$-approximation for PB, resulting in an $O(n^2 \log n + hkn \log n)$-time $3$-approximation algorithm for PB. Lastly, we define and characterize a new type of polygon, the sliceable polygon. A sliceable polygon is a convex polygon that contains no Voronoi vertex from the Voronoi diagram of its vertices. We give a dynamic programming algorithm to solve PB exactly on a sliceable polygon in $O(kn^2)$ time. 

\keywords{$k$-center \and Polygon covering \and Voronoi diagram}
\end{abstract}

%---------------
% INTRODUCTION
%---------------

\section{Introduction}
\label{section:introduction}

Given a set $S$ of $n$ points representing clients or demands, the \emph{$k$-center} problem asks to determine a collection $C$ of $k$ center points for placing facilities so as to minimize the maximum distance from any demand to its nearest facility. Geometrically speaking, the goal is to find the centers of $k$ equal-radius balls whose union covers $S$ and whose common radius, the \emph{radius} of the $k$-center, is as small as possible. This paper assumes the discrete version of the $k$-center problem where centers are selected from $S$. 

The $k$-center problem is NP-hard when $k$ is an arbitrary input parameter and NP-hard to approximate within a factor of  $2-\epsilon$ for any $\epsilon > 0$. However, there exist several $2$-approximation algorithms that hold in any metric space~\cite{Gonzalez1985,HochbaumShmoys1985}. Gonzalez, for one, gave a greedy approach: Select the first center from $S$ arbitrarily, and while $\vert C \vert < k$, repeatedly find the point in $S$ whose minimum distance to the chosen centers is maximized and add it to $C$. 

In many real-world applications, demands are not restricted to a discrete set but may be distributed throughout an area. Consider, for example, installing charging stations in a warehouse so that the worst-case travel time of robots to their nearest stations is minimal. In practice, regions of demand are often modelled using polygonal domains. A \emph{polygonal domain} $\D$ with $h$ holes and $n$ vertices is a connected region whose boundary $\partial \D$ comprises $n$ line segments that form $h+1$ simple closed polygonal chains. If $\D$ is without holes, then it is a \emph{simple polygon}. We define the \emph{geodesic distance} $d(s, t)$ between any two points $s, t \in \D$ to be the Euclidean length of the shortest path connecting $s$ and $t$ that is contained in $\D$. 

Given a polygonal domain $\D$, the geodesic $k$-center problem on $\D$ asks to find a set $C$ of $k$ points in $\D$ that minimizes the maximum geodesic distance from any point in $\D$ to its closest point in $C$. We call $C$ the \emph{$k$-center} of $\D$. Asano and Toussaint~\cite{AsanoToussaint1985} gave the first algorithm for computing the $1$-center of a simple polygon with $n$ vertices; it runs in $O(n^4 \log n)$ time. This result was later improved by Pollack et al.~\cite{Pollack1989} to $O(n \log n)$, and recently, Ahn et al.~\cite{Ahn2016} presented an optimal linear-time algorithm. Following these explorations, Oh et al. ~\cite{OhDeCarufelAhn2018} gave an $O(n^2 \log^2 n)$-time algorithm for computing the $2$-center of a simple polygon. However, it appears that no results are known for $k > 2$ in the case of simple polygons. Likewise, for polygons with one or more holes, results are limited: only the $1$-center problem has been solved with a running time of $O(n^{11} \log n)$~\cite{Wang2016}.

In practice, facilities are often restricted to feasible locations. Hence, there has been some interest in constrained versions of the geodesic $k$-center problem on polygonal domains. Oh et al.~\cite{OhSonAhn2016} considered the problem of computing the $1$-center of a simple polygon constrained to a set of line segments or simple polygonal regions in the polygon. Du and Xu~\cite{DuXu2014}  proposed a 1.8841-approximation algorithm for computing the $k$-center of a convex polygon $P$ with centers restricted to the boundary of $P$. 

In this paper, we consider a new variant of the geodesic $k$-center problem that restricts facilities to the vertices of the given polygonal domain. Unlike the original problem and the constrained versions above, our problem is a combinatorial optimization problem: We draw centers from a finite set of points rather than a region in the plane. Viewing each vertex as a potential site to start a fire, we arrive at the following problem formulation we adopt in this paper. 

\begin{definition}[Polygon Burning]
\label{definition:pb}
Given a polygonal domain $\D$ with $h$ holes and $n$ vertices and an integer $k \in [1, n]$, find a set $S$ of $k$ vertices of $\D$ such that $\D$ is consumed as quickly as possible when burned simultaneously and uniformly from $S$.
\end{definition}

Section~\ref{section:preliminaries} is devoted to the background required for our study. In Section~\ref{section:hardness}, we prove that PB is NP-hard when $k$ is part of the input. In Section~\ref{section:approximation}, we show that the $k$-center of the vertices of $\D$ under the geodesic metric on $\D$ provides a $2$-approximation for PB on $\D$. This result leads to an $O(n^2 \log n + hkn \log n)$-time $3$-approximation algorithm for PB. Finally, given the NP-hardness of PB in general, we shift our focus to restricted instances. In Section~\ref{section:sliceable}, we consider convex polygons that contain no Voronoi vertex from the Voronoi diagram of their vertices. We call such instances sliceable. Their structure admits a natural ordering of separable subproblems, permitting an exact $O(kn^2)$ algorithm using the dynamic programming technique. 

%---------------
% PRELIMINARIES
%---------------

\section{Preliminaries} 
\label{section:preliminaries}

Unless stated otherwise,
the distance metric $d$ we use on a polygonal domain $\D$ is the geodesic metric on $\D$.
The \emph{diameter} of $\D$, $\diam{\D}$, is the largest distance between any two points in $\D$.

Let $S = \{s_1, s_2, \dots, s_k\}$ be a set of $k$ points, called \emph{sites} or \emph{burn sites}, in a region $R$. The Voronoi diagram $\text{VD}_R(S)$ of $S$ is the subdivision of $R$ into $k$ Voronoi regions, one per site $s_i \in S$, such that any point in the Voronoi region of $s_i$ is closer to $s_i$ (using the geodesic metric on $R$) than to any other site in $S$. We refer to $\text{VD}_{\mathbb{R}^2}(S)$ as $\text{VD}(S)$. 

Consider a polygonal domain $\D$ with vertices $\V = \{v_1, v_2, \dots, v_n\}$. Let $S \subseteq \V$ be a selection of $k$ burn sites. Each Voronoi region $\D_i$ in the Voronoi diagram $\text{VD}_{\D}(S)$ of $S$ is the set of points in $\D$ burned by the fire from site $s_i \in S$. We associate with each point $p$ in $\D_i$ the time it burns, which is the distance travelled by the fire from $s_i$ to $p$. It follows that $\D$ burns in time $t_{S}(\D) = \max_{s_i \in S} \max_{p \in \D_i} d(s_i, p)$. As described in Definition~\ref{definition:pb}, PB asks to find a set $S \subseteq V$, $\vert S \vert = k$, that minimizes $t_{S}(\D)$.
We let $S_k(\D)$ denote such an optimizing set and 
let $\opt{k}{\D}$ be the minimum burning time of $\D$. 

A \emph{geodesic disk} of radius $r$ centered at a point $p \in \D$ is the set of points in $\D$ at most geodesic distance $r$ from $p$. By definition, the union of $k$ geodesic disks of radius $\opt{k}{\D}$ centered at the sites in $S_k(\D)$ contains $\D$.
Observe that $\diam{\D} \leq 2k \cdot \opt{k}{\D}$ since $\D$ cannot be covered by $k$ geodesic disks of radius $\opt{k}{\D}$ otherwise. 
The time to burn $\D$ given any non-empty selection of burn sites is at most $\diam{\D}$.
Hence any non-empty selection of burn sites in $\V$ gives a $2k$-approximation for $\PB$ with $k$ sites on $\D$. 

%---------------
%   HARDNESS
%---------------

\section{Hardness} 
\label{section:hardness} 

In this section, we show that PB is NP-hard on polygonal domains. 
We reduce from 4-Planar Vertex Cover (4VPC): Given a planar graph $G$ with max-degree four and an integer $\kappa$, does $G$ contain a vertex cover (i.e., a set of vertices $C \subseteq V(G)$ such that every edge in $G$ contains at least one vertex in $C$) of size at most $\kappa$? This problem is known to be NP-hard~\cite{GareyJohnson1979}. 

Given an instance $G,\kappa$ of 4PVC, we construct an equivalent instance of PB. First we compute an orthogonal drawing $\UpGamma$ of $G$ with $O(n)$ bends on an integer grid of $O(n^2)$ area (Figure~\ref{figure:graph-drawing}a) using an $O(n)$-time algorithm due to Tomassia and Tollis~\cite{TamassiaTollis1989}. Every edge $uv \in E(G)$ is represented as a sequence of connected line segments $\overline{p_1 p_2}, \overline{p_2 p_3}, \dots, \overline{p_{i-1} p_i}$ in $\UpGamma$, denoted $\UpGamma(uv)$, where $p_1 = \UpGamma(u)$ and $p_i = \UpGamma(v)$ correspond to the endpoints of $uv$ and $p_2, \dots, p_{i-1}$ are \emph{bends} in $\UpGamma(uv)$. The length $\vert \UpGamma(uv) \vert$ of $\UpGamma(uv)$ is the sum of the lengths of its line segments. 

\begin{figure}
\centering
\includegraphics[width=.975\textwidth]{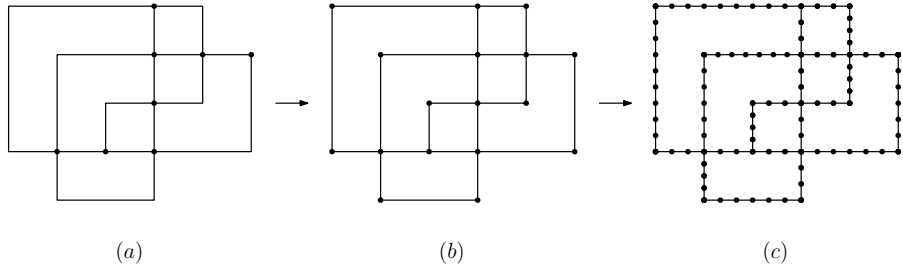}
\caption{(a) A planar orthogonal grid drawing $\UpGamma$ of $G$, (b) a straight-line grid drawing (step 1), and (c) the drawing $\UpPi$ of the subdivision $H$ of $G$ satisfying Property~\ref{property:edge-length-restriction} and \ref{property:vertex-cover-subdivsion} (step 2).}
\label{figure:graph-drawing}
\end{figure}

Next we transform $\UpGamma$ into a constrained straight-line drawing $\UpPi$ of a subdivision $H$ of $G$ in two steps. First we add a vertex at every bend in $\UpGamma$ (Figure~\ref{figure:graph-drawing}b). Then we replace each segment $\overline{p_j p_{j+1}}$ ($1 \leq j < i$) along $\UpGamma(uv)$ with either $3 \vert \overline{p_j p_{j+1}} \vert$ or $3 \vert  \overline{p_j p_{j+1}} \vert + 1$ equal-length edges depending on the parity required to ensure that the overall number $\ell_{uv}$ of segments along $\UpGamma(uv)$ is odd (Figure~\ref{figure:graph-drawing}c). Property~\ref{property:edge-length-restriction} and \ref{property:vertex-cover-subdivsion} follow from these steps. Property~\ref{property:vertex-cover-subdivsion} is due to the fact that a double subdivision of an edge in $G$ increases the size of any vertex cover of $G$ by one. 

\begin{property}
\label{property:edge-length-restriction}
For every $uv \in E(H)$, $\frac{1}{4} \leq \vert \UpPi(uv) \vert \leq \frac{1}{3}$.
\end{property}

\begin{property}
\label{property:vertex-cover-subdivsion}
$G$ has a vertex cover of size $\kappa$ if and only if $H$ has a vertex cover of size $K(G) := \kappa + \frac{1}{2} \sum_{uv \in E}(\ell_{uv} - 1)$. 
\end{property}

Finally, we convert $\UpPi$ into a polygonal domain $P(G)$ by thickening each line segment in $\UpPi$ as follows. For every vertex $v \in V(H)$, we replace $\UpPi(v)$ with a set $S(v)$ of four vertices at $\UpPi(v) + (-\epsilon, \epsilon)$, $\UpPi(v) + (\epsilon, \epsilon)$, $\UpPi(v) + (\epsilon, -\epsilon)$, and $\UpPi(v) + (-\epsilon, -\epsilon)$, where $\epsilon < \frac{1}{120}$ is a fixed constant. Let $R(uv)$ denote the convex hull of $S(u) \cup S(v)$. We define $P(G)$ to be the union of the collection of regions $R(uv)$ for all $uv \in E(H)$. 

It is straightforward to verify that the above transformation of an instance $G$ of 4PVC to an instance $P(G)$ of PB runs in $O(n)$ time. Furthermore, $P(G)$ has $O(n)$ vertices, and the number of bits required in the binary representation of each vertex coordinate is bounded by a polynomial in $n$. It remains to demonstrate that:

\begin{lemma}
$G$ has a vertex cover of size at most $\kappa$ if and only if $P(G)$ can be burned in time $\frac{1}{3} + 3 \epsilon$ using $K(G)$ sites. 
\end{lemma}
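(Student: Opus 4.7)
The plan is to prove the iff by constructing explicit mappings in each direction and invoking Property~\ref{property:vertex-cover-subdivsion} to translate between vertex covers of $G$ of size $\kappa$ and vertex covers of $H$ of size $K(G)$.

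For the forward direction, starting from a vertex cover $C$ of $H$ of size $K(G)$, I would pick an arbitrary vertex from $S(v)$ for each $v \in C$ as the burning set. Since $C$ covers every edge $uv \in E(H)$, some site $s$ lies in $S(u) \cup S(v) \subset R(uv)$. The region $R(uv)$ is convex and contained in $P(G)$, so the geodesic from $s$ to any $p \in R(uv)$ coincides with the Euclidean segment, whose length is bounded by the diameter $\sqrt{(|\UpPi(uv)| + 2\epsilon)^2 + 4\epsilon^2}$. By Property~\ref{property:edge-length-restriction} and a direct algebraic check (the inequality reduces to $\epsilon^2 + 2\epsilon/3 \geq 0$), this diameter is at most $\frac{1}{3} + 3\epsilon$, so $P(G) = \bigcup_{uv \in E(H)} R(uv)$ burns in the required time.

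For the reverse direction, given a burning set $S^*$ of size $K(G)$ achieving time $\frac{1}{3}+3\epsilon$, I would set $C' := \{v \in V(H) : S^* \cap S(v) \neq \emptyset\}$. The sets $S(v)$ are pairwise disjoint, so $|C'| \leq |S^*| = K(G)$. The key step is to argue $C'$ is a vertex cover of $H$, after which Property~\ref{property:vertex-cover-subdivsion} finishes the job. Assume toward contradiction that $uv \in E(H)$ has $u, v \notin C'$. The crucial geometric observation is that $\partial R(uv) \cap \mathrm{int}(P(G))$ lies inside the $2\epsilon \times 2\epsilon$ ``small squares'' around $\UpPi(u)$ and $\UpPi(v)$; the rest of $\partial R(uv)$ lies on $\partial P(G)$. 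Hence any geodesic from an external site $s \in S^*$ to an interior point $p \in R(uv)$ must enter $R(uv)$ through one of these two necks, yielding a bound of the form $d(s, p) \geq (|\UpPi(uw)| - 2\epsilon) + \delta(p)$, where $w$ is an $H$-neighbor of $u$ (or symmetrically of $v$) and $\delta(p)$ is the distance from the chosen neck corner to $p$ inside $R(uv)$. By Property~\ref{property:edge-length-restriction} the first term is at least $\frac{1}{4} - 2\epsilon$.

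The hard case is when the entering edge $uw$ is perpendicular to $uv$, which happens at a bend of the orthogonal drawing or at an original $G$-vertex with a perpendicular incidence. Here the geodesic cuts the corner at a reflex vertex of $S(u)$, so $\delta(p)$ shrinks to $\sqrt{(x_p - \epsilon)^2 + (y_p - \epsilon)^2}$ measured from that reflex corner. To handle this I would exploit the fact that, by the subdivision step, at least three edges of $H$ separate consecutive bends or $G$-vertices, so at most one endpoint of $uv$ is non-collinear; the other end's fire is collinear and admits a strictly stronger Euclidean bound. Choosing $p$ inside $R(uv)$ to maximize the minimum of $d(s, p)$ over both necks (roughly a point on the axis of $R(uv)$ slightly past its midpoint toward the perpendicular side) and invoking $\epsilon < \frac{1}{120}$ to keep the resulting gap strictly positive yields the desired contradiction. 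Therefore $C'$ is a vertex cover of $H$ of size at most $K(G)$, and a final application of Property~\ref{property:vertex-cover-subdivsion} produces a vertex cover of $G$ of size $\kappa$.
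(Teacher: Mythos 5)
Your proposal takes essentially the same route as the paper's proof: it reduces, via Property~\ref{property:vertex-cover-subdivsion}, to the per-edge claim that $R(uv)$ burns in time $\frac{1}{3}+3\epsilon$ if and only if some vertex of $S(u)\cup S(v)$ is a burn site, handles the forward direction by a diameter bound on $R(uv)$, and handles the reverse direction by lower-bounding the fire's travel through the necks at $u$ and $v$ (at least $\frac{1}{4}-O(\epsilon)$ to reach the neck plus roughly $\frac{1}{8}$ to the middle of $R(uv)$), using the collinearity of subdivision vertices exactly as the paper does. The only caveat is bookkeeping: the paper's explicit lower bound is $\frac{3}{8}-2\epsilon>\frac{1}{3}+3\epsilon$, which is tight at $\epsilon=\frac{1}{120}$, so you should check that your accumulated $O(\epsilon)$ losses (e.g., the $\sqrt{2}\epsilon$ corner offsets) still clear the threshold for all $\epsilon<\frac{1}{120}$.
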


\begin{figure}
\centering
\includegraphics[width=.79\textwidth]{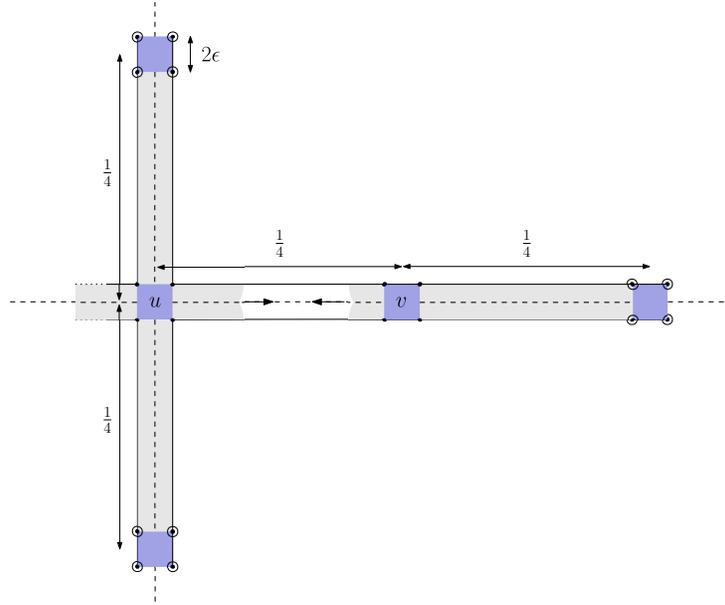}
\caption{A scenario where $R(uv)$ is burned the quickest assuming that no sites (circled) are selected from either $S(u)$ or $S(v)$. The two dashed lines are the only integer grid lines in the figure.
} 
\label{figure:hardness-lemma}
\end{figure}

\begin{proof}
It suffices to show that for any $uv \in E(H)$, $R(uv)$ can be burned in time $\frac{1}{3} + 3 \epsilon$ if and only if at least one vertex in $S(u) \cup S(v)$ is a burn site. The forward direction follows from observing that $\frac{1}{3} + 3 \epsilon$ is a loose upper bound on the burning time of $R(uv)$ given that a site is located in either $S(u)$ or $S(v)$ (Property~\ref{property:edge-length-restriction}). For the reverse direction, suppose no vertices in $S(u)$ or $S(v)$ are selected. We obtain a lower bound on the burning time of $R(uv)$ by considering the scenario where $R(uv)$ is burned the quickest: First, for each vertex $w \in H$ adjacent to either $u$ or $v$, let every vertex in $S(w)$ be a burn site. Second, assume $u$ and $v$ have as many adjacent edges as possible in $E(H)$ to assist in burning $R(uv)$. At most one of these two adjacent vertices can have degree greater than two since at most one is on the integer grid, and this vertex, say $u$, can have degree at most four. The other vertex $v$ can have degree two, but its adjacent edges must be colinear in the drawing.
Finally, suppose all these edges are as short as possible in the drawing $\UpPi$ ($\frac{1}{4}$ by Property~\ref{property:edge-length-restriction}). We find that the burning time of $R(uv)$, if no vertex in $S(u)$ or $S(v)$ is a site, is bounded below by $\frac{3}{8} - 2 \epsilon > \frac{1}{3} + 3 \epsilon$ (see Figure~\ref{figure:hardness-lemma}).
The lemma then follows from Property~\ref{property:vertex-cover-subdivsion}.
\end{proof}

As a result, we obtain: 

\begin{theorem}
\label{theorem:hardness} 
PB is NP-hard on polygonal domains. 
\end{theorem}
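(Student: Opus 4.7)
The plan is to finish the reduction from 4PVC to PB by combining the preceding lemma with the efficiency of the construction $G \mapsto P(G)$, and then invoking the known NP-hardness of 4PVC. The theorem is stated for the optimization problem, so I would first fix a decision version of PB: given a polygonal domain $\D$, an integer $k$, and a rational threshold $t$, decide whether there is a set $S$ of $k$ vertices of $\D$ with $t_S(\D) \le t$. NP-hardness of this decision version implies NP-hardness of the optimization version in the usual sense.

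Next I would check that the reduction is polynomial. The construction maps a 4PVC instance $(G,\kappa)$ to the PB decision instance $(P(G), K(G), \tfrac{1}{3}+3\epsilon)$. Earlier in the excerpt it is established that $P(G)$ is built in $O(n)$ time, has $O(n)$ vertices, and uses coordinates whose binary representations are polynomial in $n$; the constant $\epsilon$ may be chosen as a fixed rational (e.g. $\tfrac{1}{121}$), so the threshold $\tfrac{1}{3}+3\epsilon$ also has constant bit-length. Finally, $K(G) = \kappa + \tfrac{1}{2}\sum_{uv \in E(G)}(\ell_{uv}-1)$ is computable in polynomial time from the orthogonal drawing $\UpGamma$ and its subdivision $\UpPi$, since each $\ell_{uv}$ is bounded by the polynomial total segment count of $\UpPi$.

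With the reduction in hand, correctness is immediate from the preceding lemma: $G$ admits a vertex cover of size at most $\kappa$ exactly when $P(G)$ can be burned by $K(G)$ sites within time $\tfrac{1}{3}+3\epsilon$. Since 4PVC is NP-hard \cite{GareyJohnson1979}, the decision version of PB is NP-hard, and hence so is PB.

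The main obstacle in this argument has already been absorbed into the lemma, which handles the delicate geometric bookkeeping (showing that a region $R(uv)$ not directly ignited at $S(u)$ or $S(v)$ needs strictly more than $\tfrac{1}{3}+3\epsilon$ time to burn, even under the most favorable external configuration). Everything that remains at the level of the theorem itself, that is, polynomial bit complexity of coordinates, polynomial-time computability of $K(G)$, and reduction from an NP-hard source problem, is routine and needs only to be assembled.
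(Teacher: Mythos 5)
Your proposal is correct and matches the paper's approach exactly: the paper derives the theorem directly from the preceding lemma together with the already-noted facts that the construction of $P(G)$ runs in $O(n)$ time, produces $O(n)$ vertices, and uses coordinates of polynomial bit complexity. The only difference is that you make explicit the decision version of PB and the computability of $K(G)$ and the threshold, which the paper leaves implicit.
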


%-------------------
%   APPROXIMATION
%-------------------

\section{Approximation by a $k$-Center}
\label{section:approximation}

We present a straightforward $3$-approximation algorithm for PB by considering the $k$-center problem described in the introduction. 

\begin{theorem}
\label{theorem:approx-by-k-center}
The radius of a $k$-center of the vertices $\V$ of $\D$, using the geodesic metric on $\D$, provides a $2$-approximation of $\opt{k}{\D}$.
\end{theorem}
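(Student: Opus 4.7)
The plan is to compare a set $C^{\ast}$ of $k$ vertices attaining the optimal vertex $k$-center radius $r^{\ast}$ under the geodesic metric on $\D$ against the set $S^{\ast} := S_k(\D)$ of $k$ optimal burn sites for $\D$, and to establish two inequalities: (i)~$r^{\ast} \leq \opt{k}{\D}$, and (ii)~using $C^{\ast}$ itself as burn sites gives $t_{C^{\ast}}(\D) \leq 2\,\opt{k}{\D}$. Together these certify that the $k$-center vertices form a $2$-approximate burn set; in particular $r^{\ast}$ itself is sandwiched between $\opt{k}{\D}/2$ (if the corresponding burn time is) and $\opt{k}{\D}$.

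Step (i) follows directly from the observation that $S^{\ast}$ is itself a set of $k$ vertices, hence a candidate for the vertex $k$-center of $\V$. Therefore $r^{\ast} \leq \max_{v \in \V} \min_{s \in S^{\ast}} d(v, s)$, and because every vertex of $\D$ is consumed by the fire from $S^{\ast}$ within time $\opt{k}{\D}$, the right-hand side is itself at most $\opt{k}{\D}$. For step (ii), I would fix an arbitrary $p \in \D$ and perform a single triangle-inequality hop through an optimal burn site: the fire from $S^{\ast}$ reaches $p$ by time $\opt{k}{\D}$, so some $s \in S^{\ast}$ satisfies $d(p, s) \leq \opt{k}{\D}$; since $s \in \V$, the definition of $r^{\ast}$ furnishes some $c \in C^{\ast}$ with $d(s, c) \leq r^{\ast}$. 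The geodesic metric on $\D$ obeys the triangle inequality, so $d(p, c) \leq d(p, s) + d(s, c) \leq \opt{k}{\D} + r^{\ast} \leq 2\,\opt{k}{\D}$ by (i). Taking the maximum over $p \in \D$ yields $t_{C^{\ast}}(\D) \leq 2\,\opt{k}{\D}$.

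The argument reduces to one triangle-inequality hop, so I do not foresee a serious technical obstacle; the key conceptual move is to recognise that the optimal burn set $S^{\ast}$ doubles as a valid $k$-subset for the vertex $k$-center problem, which is exactly what lets the burning-time bound transfer to the $k$-center radius and then, via the triangle inequality, back to the burning time achieved by $C^{\ast}$.
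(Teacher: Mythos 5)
Your proposal is correct and follows essentially the same route as the paper: both establish $r^{\ast}\le\opt{k}{\D}$ by noting that the optimal burn sites (being vertices that cover all of $\V$ within time $\opt{k}{\D}$) are a candidate $k$-center, and both bound the burn time of the $k$-center by a single triangle-inequality hop $p \to s \to c$ through an optimal burn site, giving $d(p,c)\le\opt{k}{\D}+r^{\ast}\le 2\,\opt{k}{\D}$. The only difference is expository: you spell out the justification of $r^{\ast}\le\opt{k}{\D}$ that the paper compresses into ``since $\D\supseteq V$.''
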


\begin{proof}
Let $C \subseteq \V$ denote a $k$-center of $\V$ and let $r$ denote its radius. Observe two facts: First, $\opt{k}{\D} \geq r$ since
$\D \supseteq V$. Second, each point $p \in \D$ is within $\opt{k}{\D}$ of a vertex $v$ of $\D$, and $v$ is at most $r$ from some center $c$ in $C$. Therefore, by the triangle inequality, $d(p, c) \leq \opt{k}{\D} + r \leq 2 \opt{k}{\D}$ as desired. 
\end{proof}

\begin{corollary}
Applying Gonzalez's greedy $2$-approximation algorithm for finding a $k$-center of $\V$ yields an $O(n^2 \log n + hkn \log n)$-time $3$-approximation algorithm for PB on $\D$ that uses $O(n^2)$ space. 
\end{corollary}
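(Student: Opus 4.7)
The plan is to establish the corollary in two stages: first the $3$-approximation factor, then the running-time and space bounds.

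For the approximation factor, let $C \subseteq \V$ be the set of $k$ vertices returned by Gonzalez's algorithm and let $r = \max_{v \in \V} d(v, C)$ denote its $k$-center radius. Gonzalez's $2$-approximation guarantee, combined with the inequality $r^* \leq \opt{k}{\D}$ from the proof of Theorem~\ref{theorem:approx-by-k-center} (where $r^*$ is the optimal $k$-center radius of $\V$), yields $r \leq 2 r^* \leq 2\opt{k}{\D}$. I then repeat the triangle-inequality argument of Theorem~\ref{theorem:approx-by-k-center}: every $p \in \D$ lies within $\opt{k}{\D}$ of some vertex $v \in \V$ (for instance, the burn site in $S_k(\D)$ covering $p$), and $v$ lies within $r$ of some $c \in C$, so $d(p, c) \leq \opt{k}{\D} + r \leq 3\opt{k}{\D}$. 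Taking the maximum over $p$ gives $t_C(\D) \leq 3\opt{k}{\D}$.

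For the running time, Gonzalez's algorithm performs $k$ rounds; each round inserts the vertex farthest from the current center set and updates $d(v, C)$ for every $v \in \V$. The per-round work is dominated by a single-source geodesic shortest-path (SSSP) computation from the new center to all vertices of $\D$. I would first build a shared geodesic data structure on $\D$ in $O(n^2 \log n)$ time and $O(n^2)$ space, after which each SSSP from a source vertex in a domain with $h$ holes can be answered by a known algorithm in $O(hn \log n)$ time. Summing over the $k$ rounds, the SSSP work contributes $O(hkn \log n)$, and the farthest-vertex selection adds only $O(n)$ per round, yielding the claimed $O(n^2 \log n + hkn \log n)$ time and $O(n^2)$ space.

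The main subtlety is obtaining a factor of $3$ rather than $4$, since a naive composition of two $2$-approximations would suggest a $4$-approximation. The saving comes from recognizing that the two additive errors in the triangle inequality originate from different bounds: the point-to-vertex gap is at most $\opt{k}{\D}$ via the optimal PB solution, while the term $r \leq 2\opt{k}{\D}$ absorbs Gonzalez's approximation factor, so their sum is $3\opt{k}{\D}$ rather than $4\opt{k}{\D}$. The running-time analysis then reduces to invoking the standard shortest-path machinery for polygonal domains with holes.
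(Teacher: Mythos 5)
Your proposal is correct and follows essentially the same route as the paper: it reuses the triangle-inequality argument of Theorem~\ref{theorem:approx-by-k-center} with the approximate radius $r \leq 2r^*$ in place of $r^*$ to get the factor $3$, and accounts for the running time via $O(kn)$ geodesic distance queries supported by an $O(n^2\log n)$-time, $O(n^2)$-space preprocessing structure for polygonal domains with holes (the paper cites Guo et al.\ for exactly this). No gaps.
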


\begin{proof}
The $2$-approximation algorithm provides an approximate $k$-center of $V$ whose radius $r'$ is at most $2r$ where $r$, as in the above proof, is the optimal $k$-center radius.
Following that proof, this yields a $3$-approximation.
The time and space complexity are due to performing $O(kn)$ geodesic distance queries on $\D$ using an algorithm by Guo et al.~\cite{Guo2008}.
Note,
if $\D$ is simple, then a $3$-approximation for PB can be found in $O(kn\log n)$ time using $O(n)$ space
by the faster geodesic distance queries of Guibas and Hershberger~\cite{GuibasHershberger1989}. 
\end{proof}

%---------------
%   SLICEABLE
%---------------

\section{Sliceable Polygons}
\label{section:sliceable} 

\begin{definition}
A \emph{sliceable} polygon $\D$ is convex and contains no Voronoi vertex from the Voronoi diagram $\VD{\V}$ of its vertices $\V$. 
\end{definition}

Every Voronoi edge in $\VD{V}$ that intersects $\D$ slices through $\D$ (Figure~\ref{figure:sliceable}). We can solve PB on $\D$ using dynamic programming, as $\D$ admits a total ordering of vertices with the property that if $u < v < w$ are burn sites, then the region of $\D$ burned by $u$ does not share a boundary with the region of $\D$ burned by $w$ (Lemma~\ref{lemma:ordering}). We start with a simple example that indicates the use of this property. 

\begin{figure}
\centering
\includegraphics[width=.5\textwidth]{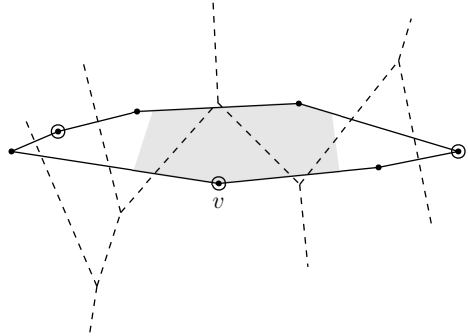}
\caption{A sliceable polygon $\D$ overlaid with the Voronoi diagram (dashed) of its vertices. By Lemma~\ref{lemma:ordering}, the region of $\D$ (shaded) burned by a site $v$ separates the regions burned by sites (circled) before $v$ in the ordering from regions burned by sites after $v$, no matter what those sites are. This holds for every $v$.} 
\label{figure:sliceable}
\end{figure}

\subsection{Polygons in One Dimension} 

Let $\D$ be a $1$-dimensional polygon with $n$ vertices $v_1, v_2, \dots, v_n$ ordered by x-coordinate. Let $P[i, j]$ be the segment of $P$ from $v_i$ to $v_j$. The minimum time to burn $\D$ using $k$ sites is 
\[
\opt{k}{\D}  =
\begin{cases}
\min_{i \in [n]} \max\{ d(v_1,v_i) , \optL(i,k-1)\} & \text{if $k>0$}\\
\infty & \text{otherwise,}
\end{cases} 
\]
where $d(v_1, v_i)$ is the time to burn $\D[1, i]$ from site $v_i$ and $\optL(i,k)$ denotes the minimum time to burn $\D[i,n]$ using $k$ sites in addition to $v_i$. If $k > 0$, then $\optL(i,k)$ is achieved by choosing the next site $v_j$ ($i<j\leq n$) to minimize the larger of two values: (i) the time $d(v_i,v_j)/2$ to burn $\D$ between $v_i$ and $v_j$ and (ii) the minimum time to burn $\D[j,n]$ knowing $v_j$ is a burn site with $k-1$ burn sites remaining. If $k = 0$, no sites are allowed beyond $v_i$, in which case the minimum time to burn $\D[i, n]$, with $v_i$ a burn site, is $d(v_i, v_n)$. 
\[
\optL(i,k) =
\begin{cases}
  \min_{i < j \leq n} \max\{ d(v_i,v_j)/2 , \optL(j,k-1) \} &
  \text{if $k > 0$,}\\
  d(v_i, v_n) & \text{otherwise.}
\end{cases}
\]
This recurrence relation relies only on the property that any burn site preceding the burn site $v_i$ is farther from every point in $\D[i,j]$ than $v_i$ for $j > i$. We will prove a similar property for sliceable polygons. 

A dynamic programming algorithm follows directly from the recurrence.

\begin{theorem}
PB can be solved in $O(kn^2)$ time on a $1$-dimension polygon with $n$ vertices. 
\end{theorem}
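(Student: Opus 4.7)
The plan is to tabulate the function $\optL(i,k)$ bottom-up using the recurrence already stated. I would maintain a two-dimensional array indexed by $i \in \{1,\ldots,n\}$ and $k \in \{0, 1, \ldots, K-1\}$, where $K$ is the input number of burn sites. Before starting the main loop, I would precompute all pairwise distances $d(v_i, v_j)$ in $O(n^2)$ time (in a 1D polygon this is just one prefix-sum scan followed by $O(1)$ lookups, but an $O(n^2)$ bound suffices and is dominated by the main cost). The base case $\optL(i,0) = d(v_i, v_n)$ fills a row in $O(n)$ time.

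Next I would fill the table in order of increasing $k$, and, within each level $k$, in any order over $i$ since the recurrence for $\optL(i,k)$ depends only on entries at level $k-1$. Each entry $\optL(i,k)$ is computed by the stated minimization over the $O(n)$ candidates $j \in \{i+1, \ldots, n\}$, taking $O(n)$ time per entry. With $O(kn)$ entries, this gives the claimed $O(kn^2)$ total. Finally, I would read off $\opt{k}{\D} = \min_{i \in [n]} \max\{d(v_1,v_i), \optL(i,K-1)\}$ in an additional $O(n)$ time, and standard backpointers during the fill recover an actual optimal site set within the same bound.

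Correctness is immediate from the recurrence derived in the text, which itself rests on the one-dimensional ordering property that sites preceding $v_i$ cannot reach points in $\D[i,n]$ sooner than $v_i$ does; so the DP evaluates the recurrence faithfully and therefore returns $\opt{k}{\D}$. Because the recurrence is already given, there is no real obstacle here: the only delicate points are making sure the base case matches the $k=0$ clause (no burn sites allowed strictly beyond $v_i$, so $v_i$ must reach $v_n$ itself) and the initial minimization over the first site $v_i$, which accounts for the cost $d(v_1, v_i)$ of the prefix $\D[1,i]$ burning from $v_i$. The mild conceptual content, namely the separability of subproblems along the 1D order, is precisely the property that the subsequent sliceable-polygon section will lift to two dimensions via Lemma~\ref{lemma:ordering}.
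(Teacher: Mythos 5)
Your proposal is correct and follows essentially the same approach as the paper: a bottom-up dynamic program over the $O(kn)$ table of $\optL(i,k)$ values, with $O(n)$ work per entry and correctness inherited directly from the recurrence and the one-dimensional separability of subproblems. The additional details you supply (distance precomputation, evaluation order, backpointers) are consistent elaborations of the paper's proof sketch rather than a different route.
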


\begin{proof} 
(Sketch) Use dynamic programming.  Two observations hold on each iteration of the algorithm: (i) The choice of the following site $v_j$ is unaffected by the sites selected before the current site $v_i$, and (ii) we evaluate every possible choice $v_j$ and take the best amongst them. The natural ordering of subproblems implied by (i) combined with the virtue of an exhaustive search as noted in (ii) allows us to successfully compute the solution to the original problem from the solutions to the recursive subproblems. 

The algorithm populates a table of size $O(kn)$. To fill each entry, it computes the minimum of $O(n)$ previous entries.  Therefore, the total running time is $O(kn^2)$. 
\end{proof}

% --- ordering --- %
\subsection{Ordering} 

\begin{lemma}
\label{lemma:ordering}
The vertices of a sliceable polygon $\D$ can be ordered such that for any burn sites $u < v < w$, the region of $P$ burned from $u$ does not share a boundary with the region in $P$ burned from $w$. 
\end{lemma}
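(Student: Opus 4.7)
The plan is to extract a linear ordering from the combinatorial structure of the restricted Voronoi diagram $\text{VD}_\D(\V)$ and then verify the separation property geometrically. I would first record the structural consequences of sliceability: because $\D$ is convex and contains no Voronoi vertex of $\VD{\V}$, every Voronoi edge of $\VD{\V}$ that meets $\D$ enters and exits through $\partial \D$ and is therefore a chord of $\D$, and any two such chords are disjoint in the interior of $\D$, since a crossing would be equidistant from at least three sites of $\V$ and hence a forbidden Voronoi vertex in $\D$. Thus $\text{VD}_\D(\V)$ decomposes $\D$ into convex cells (one per vertex) using non-crossing chords, and its cell-adjacency graph is a tree $T$.

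Next I would argue that $T$ is in fact a path. Suppose some $\text{cell}(v)$ had three chord-neighbors $v_a, v_b, v_c$; the bisectors $B(v, v_a), B(v, v_b), B(v, v_c)$ would each contribute a chord to $\partial \text{cell}(v)$, alternating with arcs of $\partial \D$ around the convex cell. Because $v$ lies on the convex boundary of $\D$, an angular analysis around $v$ forces two of these bisectors to meet inside $\D$ at a point equidistant from $v$ and two of $v_a, v_b, v_c$; a careful case check rules out any fourth site of $\V$ strictly dominating this point while preserving the chord structure, so the point would be a Voronoi vertex of $\VD{\V}$ inside $\D$, contradicting sliceability. Hence $T$ is a path $v_1 < v_2 < \cdots < v_n$, which I take as the ordering.

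To verify the separation property, consider $u = v_i$, $v = v_j$, $w = v_k$ with $i < j < k$, and any $p \in B(u, w) \cap \D$. By the path structure of $T$, the two chords of $\text{VD}_\D(\V)$ bounding $\text{cell}(v_j)$ separate $\{v_\ell : \ell < j\}$ from $\{v_\ell : \ell > j\}$ in $\D$, placing $u$ and $w$ on opposite sides of $\text{cell}(v_j)$. I would then show that the circumcenter of $u, v, w$ lies outside $\text{int}(\D)$: if it lay inside, then either it is a Voronoi vertex of $\VD{\V}$ (ruled out by sliceability) or some other vertex $v' \in \V$ is strictly closer, and locating $v'$ relative to the path ordering produces an incompatible chord around $v'$. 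With the circumcenter outside, $B(u, w) \cap \D$ does not cross $B(u, v)$ inside $\D$ and therefore lies entirely on one side of $B(u, v)$; a sign check at any point of $B(u, w) \cap \overline{\text{cell}(v_j)}$ (which is non-empty because the chord must pass from $u$'s side of $\text{cell}(v_j)$ to $w$'s side) confirms that this is the $v$-side. Hence $d(p, v) < d(p, u) = d(p, w)$ pointwise on $B(u, w) \cap \D$, and the lemma follows.

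The main obstacle is the two structural facts underpinning the argument: that $T$ is a path, and that the circumcenter of any path-ordered triple lies outside $\D$. Both require careful bookkeeping of how the remaining vertices of $\V$ interact with the bisector arrangement --- the first around a single vertex of $\V$, the second around a triple. Once these are established, the separation property reduces to a routine sign check against $B(u, v)$.
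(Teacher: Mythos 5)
Your high-level architecture (Voronoi edges meeting $\D$ are pairwise non-crossing chords, hence the cell-adjacency graph of $\text{VD}_{\D}(\V)$ is a tree; show the tree is a path; then verify the separation property for arbitrary triples of burn sites) matches the paper's, and your tree-from-non-crossing-chords observation is a clean substitute for the paper's connectivity argument. But the two claims you yourself flag as "the main obstacle" are exactly where the proof lives, and your sketches of them do not go through. For the degree bound, you assert that if $\mathrm{cell}(v)$ had three chord-neighbours then "an angular analysis around $v$ forces two of these bisectors to meet inside $\D$" at a point you can upgrade to a Voronoi vertex. The intersection of two bisectors $B(v,v_a)\cap B(v,v_b)$ is the circumcenter of $\triangle v v_a v_b$, and there is no reason it lies in $\D$ — sliceability is precisely the hypothesis that the genuine Voronoi vertices avoid $\D$, and the paper's Theorem~\ref{theorem:sliceable} shows circumcenters of Delaunay triangles of \emph{any} subset also avoid $\D$. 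So to find a forbidden point inside $\D$ you would have to account for all remaining sites of $\V$, which is the "careful case check" you skip. The paper sidesteps this entirely: it passes to the four-site diagram $\text{VD}(\{u,r,s,t\})$ (using that Voronoi edges only grow when sites are deleted, so the three crossing edges persist), places the two circumcenters $c_1,c_2$ on the far sides of the lines $rs$ and $st$, concludes $\angle rus$ and $\angle sut$ are both obtuse, and gets a reflex angle at $u$ contradicting convexity — no point inside $\D$ is ever needed.

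The second gap is your claim that the circumcenter of any ordered triple $u<v<w$ lies outside $\D$, justified by "locating $v'$ relative to the path ordering produces an incompatible chord." This statement is true, but it is the content of the paper's Theorem~\ref{theorem:sliceable} specialized to $S=\{u,v,w\}$, and the paper needs the Delaunay edge-flip machinery of Lemmas~\ref{lemma:edge-flip} and~\ref{lemma:legal-triangulation} to prove it; a one-clause appeal to an "incompatible chord" is not a proof, and the case where the circumcenter is inside $\D$ but dominated by a fourth site is the entire difficulty. To your credit, your final step — checking that $B(u,w)\cap\D$ lies wholly on the $v$-side of $B(u,v)$, so that the cells of $u$ and $w$ cannot abut in $\text{VD}_{\D}(S)$ for \emph{any} burn-site set $S$ — is more explicit than the paper, which simply declares that the path of $\text{VD}_{\D}(\V)$ "defines an ordering required by the lemma." But as written, both pillars supporting that step are asserted rather than established, so the proposal has genuine gaps at the points that matter.
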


\begin{figure}
\centering
\includegraphics[width=0.42\textwidth]{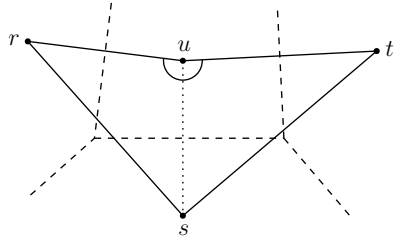}
\caption{Quadrilateral $urst$ is reflex at $u$, a contradiction establishing (P1).}
\label{figure:reflex-contradiction}
\end{figure}

\begin{proof}
We first prove that (P1) each Voronoi region in $\text{VD}_{\D}(V)$ shares a boundary with at most two other Voronoi regions. Then we show that (P2) the graph joining two vertices if they share such a boundary is connected and thus forms a path, which defines an ordering of vertices required by the lemma. (The path can be directed in two ways, either of which defines such an ordering.)

For (P1), suppose for the sake of contradiction that vertex $u$ of $\D$ forms Voronoi edges in $\text{VD}(V)$ that cross $\D$ with three other vertices, say $r$, $s$, and $t$. Since $\D$ is sliceable, the endpoints (Voronoi vertices) of these Voronoi edges lie outside $\D$. 

Let $\D'$ be the convex hull of $\{u,r,s,t\}$.
The Voronoi edge between $u$ and $r$ in $\text{VD}(\{u,r,s,t\})$ contains the corresponding Voronoi edge in $\text{VD}(V)$
since every point that is closest to $u$ and $r$ among all vertices of $V$ is still closest to $u$ and $r$ among a subset of $V$.
The same is true for the Voronoi edges between $u$ and $s$ and between $u$ and $t$.
Thus, since all three of these Voronoi edges cross $\D$ in $\text{VD}(V)$ the corresponding edges in $\text{VD}(\{u,r,s,t\})$ cross $\D$ and hence cross $\D' \subseteq \D$ as well. It follows that a sliceable polygon $\D$ with a vertex $u$ that creates Voronoi edges crossing $\D$ with three different vertices $r$, $s$, and $t$ implies the existence of a sliceable quadrilateral $\D'$ with the same property. To obtain a contradiction and establish (P1), we will argue that no such quadrilateral exists.

Assume $r$, $s$, and $t$ are labelled so that the circumcentres $c_1$ of $\triangle urs$ and $c_2$ of $\triangle ust$ are the two Voronoi vertices shared by these three Voronoi edges.
Since the boundary of the Voronoi region of $u$ intersects $\D'$ in three segments that do not contain $c_1$ or $c_2$, $c_1$ lies on the side of the line through $rs$ opposite $u$ and $c_2$ lies on the side of the line through $st$ opposite of $u$. It follows that $\angle rus$ and $\angle sut$ are obtuse. Thus the interior angle of $P'$ at $u$ is greater than $\pi$, contradicting the convexity of $P'$ (Figure~\ref{figure:reflex-contradiction}).  
This result establishes (P1).

For (P2), assume for a contradiction that the graph has more than one connected component. Then no inter-component vertices form Voronoi boundaries with each other in $\text{VD}_{\D}(V)$. It follows that the fires burning from separate connected components never meet, and hence $P$ cannot be burned entirely.
This contradiction establishes (P2).
\end{proof}

% --- sliceability of subsets --- % 
\subsection{Sliceability of Subsets} 

In this section, we study the sliceability of subsets of sliceable polygons. In particular, we show that a sliceable polygon $\D$ contains no Voronoi vertex from $\VD{S}$ for any subset $S \subseteq \V$. While the existence of a dynamic programming algorithm does not require this result, it adds to our characterization of sliceable polygons and allows us to define a simpler recurrence for PB on $\D$ which yields a faster dynamic programming algorithm. 

The \emph{Delaunay triangulation} of a set $S$ of sites, denoted $\DT{S}$, is the dual graph of $\VD{S}$. It is a triangulation of $S$ such that no circumcircle of any triangle in $\DT{S}$ contains a site. The circumcenters of the triangles are the vertices of $\VD{S}$. 

\begin{lemma}
\label{lemma:edge-flip} 
Let $T$ be a triangulation of a convex polygon $\D$. Suppose there exist adjacent triangles $pqr$ and $prs$ in $T$ that form a convex quadrilateral. If $\D$ contains the circumcenter of $\triangle pqr$ and $s$ is interior to the circumcircle of $\triangle pqr$, then $\D$ contains the circumcenter of $\triangle pqs$ or the circumcenter of $\triangle qrs$, or both. 
\end{lemma}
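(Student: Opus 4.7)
My plan is to show that either $C_3$ (the circumcenter of $\triangle pqs$) lies on the segment $\overline{M_{pq}C_1}$ or $C_4$ (the circumcenter of $\triangle qrs$) lies on the segment $\overline{M_{qr}C_1}$, where $M_{pq}$ and $M_{qr}$ are the midpoints of $pq$ and $qr$, and $C_1$ is the circumcenter of $\triangle pqr$. Since $\D$ is convex and contains $p,q,r$, both midpoints lie in $\D$; together with the hypothesis $C_1\in\D$, convexity puts the segments $\overline{M_{pq}C_1}$ and $\overline{M_{qr}C_1}$ entirely in $\D$, so either outcome of the plan delivers the lemma. Note that $C_1$ and $C_3$ both lie on the perpendicular bisector $\ell_{pq}$ of $pq$, and $C_1$ and $C_4$ both lie on the perpendicular bisector $\ell_{qr}$ of $qr$, so the question is entirely about one-dimensional positions along these two lines.

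Two standard facts pin down the position of a circumcenter along a perpendicular bisector: for a triangle with base $xy$ and opposite vertex $v$, (i) its circumcenter lies on the same side of line $xy$ as $v$ iff $\angle xvy<\pi/2$, and (ii) its distance from the midpoint of $xy$ equals $\sqrt{R^2-|xy|^2/4}$, where $R=|xy|/(2\sin\angle xvy)$ is the circumradius. Consequently, $C_3\in\overline{M_{pq}C_1}$ provided $\angle prq$ and $\angle psq$ are both acute (so that $C_1$ and $C_3$ lie on the same side of $M_{pq}$ along $\ell_{pq}$, using the convex-quadrilateral fact that $r$ and $s$ share a side of line $pq$) and $\sin\angle psq\ge\sin\angle prq$ (which gives $|M_{pq}C_3|\le|M_{pq}C_1|$). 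The analogous criterion for $C_4\in\overline{M_{qr}C_1}$ involves $\angle qpr$ and $\angle qsr$.

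It remains to verify that at least one of these criteria must hold. The hypothesis that $s$ is strictly inside the circumcircle of $\triangle pqr$, together with the inscribed angle theorem (applied once using that $r,s$ lie on the same side of $pq$ and once using that $p,s$ lie on the same side of $qr$), gives $\angle psq>\angle prq$ and $\angle qsr>\angle qpr$; in particular, within any pair of acute angles the sine inequality needed for (ii) is automatic. Moreover, the angle sum in $\triangle pqr$ forbids $\angle prq$ and $\angle qpr$ from being simultaneously $\ge\pi/2$, and the convex-quadrilateral interior angle $\angle psr=\angle psq+\angle qsr<\pi$ at $s$ forbids $\angle psq$ and $\angle qsr$ from being simultaneously $\ge\pi/2$. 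A short case split -- for example, if $\angle prq\ge\pi/2$ then $\angle qpr<\pi/2$, $\angle psq>\pi/2$, and hence $\angle qsr<\pi/2$, yielding the $C_4$ criterion -- covers every configuration. I expect the main obstacle to be marshalling these angle inequalities cleanly so that every case yields at least one good acute pair.
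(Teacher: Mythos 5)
Your proposal is correct and takes essentially the same route as the paper's proof: both arguments pin the new circumcenters to the perpendicular bisectors of $pq$ and $qr$, trap each one between the old circumcenter $C_1$ and the corresponding edge midpoint unless the angle at $s$ subtending that edge is obtuse, and then use $\angle psq + \angle qsr = \angle psr < \pi$ (convexity of the quadrilateral at $s$) to guarantee that at least one of the two is so trapped and hence lies in the convex polygon. The only difference is in the bookkeeping: you justify the monotone motion of the circumcenter along each bisector via the law of sines and the inscribed-angle theorem (and your case split even simplifies, since $\angle psq > \angle prq$ and $\angle qsr > \angle qpr$ reduce your two criteria to just $\angle psq < \pi/2$ or $\angle qsr < \pi/2$), whereas the paper reaches the same one-dimensional ordering by a qualitative circle-containment argument.
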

\begin{proof}
Assume the vertices of quadrilateral $pqrs$ are labelled in counter-clockwise order.
 By the conditions of the lemma, triangles $pqs$ and $qrs$ form the Delaunay triangulation of quadrilateral $pqrs$.
Orient $\D$ so that $\overline{pq}$ is aligned with the x-axis with $r$ and $s$ lying above it (Figure~\ref{figure:edge-flip}). Let $f$, $g$, and $h$ denote the circumcenters of $\triangle pqr$, $\triangle pqs$, and $\triangle qrs$, respectively. Since $r$ lies outside $C_{pqs}$ above $\overline{qs}$, $C_{pqs}$ lies below $C_{pqr}$, implying that $g$ is below $f$. Similarly, since $p$ lies outside $C_{qrs}$ left of $\overline{qs}$, $C_{qrs}$ lies right of $C_{pqr}$, which implies that $h$ is right of $f$. To prove that either $g$ or $h$ lies in $\D$ given that $f$ is in $\D$, consider two cases:

\begin{figure}
\centering
\includegraphics[width=0.9\textwidth]{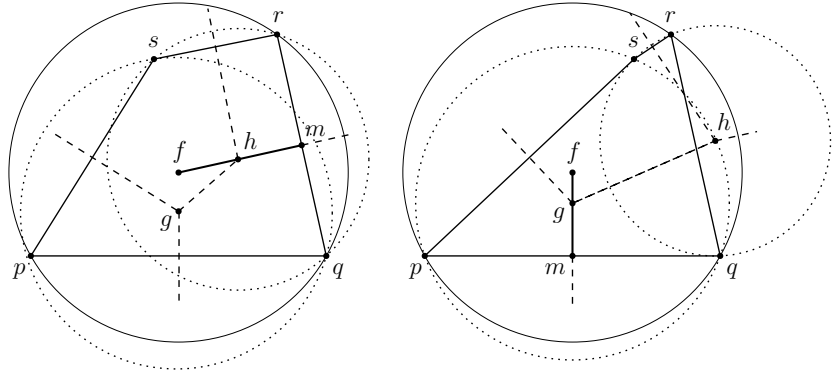}
\caption{Illustration of Case 1 (left) and Case 2 (right) of Lemma~\ref{lemma:edge-flip} with $C_{pqr}$ (solid), $C_{pqs}$ and $C_{qrs}$ (dotted), and $\VD{\{p, q, r, s\}}$ (dashed).} 
\label{figure:edge-flip}
\end{figure}

Case 1: Suppose $h$ lies on or left of $\overline{qr}$. Let $m$ be the midpoint of $\overline{qr}$. Since $h$ is right of $f$ and both $f$ and $h$ lie on the bisector of $q$ and $r$, $h$ lies along $\overline{fm}$. Hence, by the convexity of $\D$, $h$ lies in $\D$. 

Case 2: Otherwise, $h$ lies right of $\overline{qr}$. Then $\angle qsr > \frac{\pi}{2}$. We show that $g$ must lie on or above $\overline{pq}$ in this scenario. Assume for a contradiction that $g$ lies below $\overline{pq}$. Then $\angle psq > \frac{\pi}{2}$. This yields $\angle psr = \angle psq + \angle qsr > \pi$, which implies that $\D$ is not convex. This contradiction establishes that $g$ lies above $\overline{pq}$. By the same analysis provided in the previous case, we conclude that $\D$ contains $g$.
\end{proof}

\begin{lemma}
\label{lemma:legal-triangulation} 
Consider a triangulation $T$ of a convex polygon $\D$. If $\D$ contains the circumcenter of a triangle in $T$, then it contains the circumcenter of a triangle in the Delaunay triangulation $\text{DT}(\V)$ of $\V$. 
\end{lemma}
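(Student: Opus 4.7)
The plan is to transform $T$ into the Delaunay triangulation $\text{DT}(\V)$ by a finite sequence of Lawson edge flips, while maintaining the invariant that the current triangulation of $\D$ contains at least one triangle whose circumcenter lies in $\D$. By hypothesis this invariant holds for $T$; once the flip algorithm halts, the triangulation equals $\text{DT}(\V)$, and the invariant delivers the desired circumcenter.

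First I would set up the flip step. A Lawson flip acts on an \emph{illegal} edge $pr$ shared by triangles $\triangle pqr$ and $\triangle prs$ of the current triangulation; illegality means $s$ lies in the interior of the circumcircle of $\triangle pqr$ (equivalently, $q$ lies inside that of $\triangle prs$). Because $\D$ is convex and $\V$ lies on $\partial \D$, any four vertices of $\V$ are in convex position, so $pqrs$ is a convex quadrilateral and the flip legitimately replaces $pr$ by $qs$, producing triangles $\triangle pqs$ and $\triangle qrs$. It is classical that iterating flips on illegal edges halts at $\text{DT}(\V)$ after finitely many steps.

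Next I would verify that each flip preserves the invariant. Let $\tau$ be a witness triangle just before a flip that replaces $\triangle pqr$ and $\triangle prs$. If $\tau$ is neither of these two, then $\tau$ survives in the new triangulation and the invariant is immediate. Otherwise, after relabeling so that $\tau = \triangle pqr$ (the legality of a Lawson flip is symmetric in the two adjacent triangles, so swapping the roles of $\triangle pqr$ and $\triangle prs$ together with the corresponding swap of $q$ and $s$ is permissible), the hypotheses of Lemma~\ref{lemma:edge-flip} are met: $\D$ contains the circumcenter of $\triangle pqr$, and $s$ lies inside its circumcircle. The lemma then produces a new witness among $\triangle pqs$ and $\triangle qrs$, both of which belong to the new triangulation.

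The main obstacle, and the only nontrivial content, is the interface between a Lawson flip and Lemma~\ref{lemma:edge-flip}: the convex-position condition comes from the convexity of $\D$, the in-circle condition is exactly Lawson-legality, and the asymmetry in which of the two adjacent triangles is the witness is absorbed by the relabeling above. Once these matchups are made explicit, the result follows by induction on the number of remaining flips until $\text{DT}(\V)$ is reached.
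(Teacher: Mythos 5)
Your proposal is correct and follows essentially the same route as the paper: run the Lawson flip algorithm from $T$ to $\text{DT}(\V)$ and use Lemma~\ref{lemma:edge-flip} to show each flip of an illegal edge preserves the existence of a triangle whose circumcenter lies in $\D$. Your explicit relabeling step for the case where the witness is $\triangle prs$ rather than $\triangle pqr$ is a detail the paper glosses over, but it is handled correctly by the symmetry of the in-circle condition and does not change the argument.
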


\begin{proof}
Let $\overline{pr}$ be an edge in $T$ incident to two triangles $pqr$ and $prs$ that form a convex quadrilateral. We say $\overline{pr}$ is an \emph{illegal edge} if $s$ lies in $C_{pqr}$. A new triangulation $T'$ of $\D$ can be obtained from $T$ by replacing $\overline{pr}$ with $\overline{qs}$. This \emph{edge flip} operation creates $\triangle pqs$ and $\triangle qrs$ in place of $\triangle pqr$ and $\triangle prs$. If $\overline{pr}$ is illegal, then, by Lemma~\ref{lemma:edge-flip}, $\D$ contains the circumcenter of $\triangle pqs$ or $\triangle qrs$ (or both) if it contains the circumcenter of $\triangle pqr$ or $\triangle prs$. More generally, assuming that $T'$ is obtained by flipping an illegal edge in $T$, $\D$ contains the circumcenter of some triangle in $T'$ if it contains the circumcenter of some triangle in $T$. We can compute $\text{DT}(\V)$ by flipping illegal edges in $T$ until none exist~\cite{CompGeomText}. Therefore, by repeated application of Lemma~\ref{lemma:edge-flip}, $\D$ contains the circumcenter of some triangle in $\text{DT}(\V)$ if it contains the circumcenter of some triangle in $T$. 
\end{proof}

\begin{theorem}
\label{theorem:sliceable} 
If a convex polygon $\D$ does not contain the circumcenter of any triangle in $\text{DT}(\V)$, then $\D$ does not contain the circumcenter of any triangle in $\text{DT}(S)$ for any $S \subseteq \V$. 
\end{theorem}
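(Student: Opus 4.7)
The plan is to prove the contrapositive through Lemma~\ref{lemma:legal-triangulation}: assuming some triangle of $\DT{S}$ has its circumcenter inside $\D$, I will exhibit a triangulation $T$ of $\D$ on the full vertex set $\V$ whose triangle set still contains a triangle with circumcenter in $\D$. Then Lemma~\ref{lemma:legal-triangulation}, applied to $T$, forces $\DT{\V}$ to contain such a triangle as well, contradicting the hypothesis of the theorem.

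The construction of $T$ is the key step. Since $\D$ is convex and $S \subseteq \V$, the convex hull $\CH{S}$ is an inscribed convex polygon of $\D$, and the closure of $\D \setminus \CH{S}$ splits into a collection of ``ears,'' each bounded by exactly one chord of $\D$ (an edge of $\CH{S}$ that is not an edge of $\partial \D$) together with a chain of consecutive edges of $\partial \D$. Each ear is itself convex, being the intersection of $\D$ with a halfplane, and all of its vertices already lie in $\V$. I triangulate each ear arbitrarily (for instance, by a fan from one of its vertices) and take the union of these triangulations with $\DT{S}$. The result is a triangulation $T$ of $\D$ whose vertex set is exactly $\V$; crucially, every triangle of $\DT{S}$ appears intact in $T$, so the triangle of $\DT{S}$ whose circumcenter lies in $\D$ is also a triangle of $T$.

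With this $T$ in hand, Lemma~\ref{lemma:legal-triangulation} applied to $T$ immediately produces a triangle of $\DT{\V}$ whose circumcenter lies in $\D$, completing the contrapositive. The only potential obstacle in this plan is verifying that the extension of $\DT{S}$ by triangulating the ears leaves the critical circumcenter witness undisturbed; this is essentially automatic, since the ears meet $\CH{S}$ only along shared boundary segments, and no existing triangle of $\DT{S}$ is modified or flipped. All of the nontrivial combinatorial work -- namely, that legalizing an illegal edge preserves the property of containing some circumcenter in $\D$ -- is already packaged into Lemmas~\ref{lemma:edge-flip} and~\ref{lemma:legal-triangulation}, so no further induction on $|\V \setminus S|$ or case analysis is needed.
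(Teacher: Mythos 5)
Your proposal is correct and follows essentially the same route as the paper: a contrapositive argument that extends the witnessing triangle of $\DT{S}$ to a triangulation $T$ of $\D$ on all of $\V$ and then invokes Lemma~\ref{lemma:legal-triangulation}. The only difference is that you construct $T$ explicitly (keeping all of $\DT{S}$ and fanning the ears of $\D \setminus \CH{S}$), whereas the paper simply takes any triangulation of $\D$ containing the one relevant triangle $\triangle pqr$.
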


\begin{proof}
For completeness, we restate the theorem in terms of Voronoi diagrams.   If a convex polygon $\D$ does not contain any Voronoi vertex of $\text{VD}(\V)$, then $\D$  does not contain any Voronoi vertex of $\VD{S}$ for any $S \subseteq \V$. 

We provide a contrapositive proof. Suppose $\D$ contains the circumcenter of $\triangle pqr$ in $\text{DT}(S)$. Let $T$ be any triangulation of $\D$ containing $\triangle pqr$. Of course, $\D$ contains the circumcenter of a triangle in $T$, implying that $\D$ contains the circumcenter of a triangle in $\text{DT}(\V)$ by Lemma~\ref{lemma:legal-triangulation}. 
\end{proof}

\begin{corollary}
\label{corollary:sliceable}
If $\D$ is sliceable, then the convex hull of $S$ is sliceable for any $S \subseteq \V$.
\end{corollary}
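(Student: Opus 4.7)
The plan is to derive the corollary as an essentially direct consequence of Theorem~\ref{theorem:sliceable}, with one small bookkeeping point about which vertex set to use.

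First, I would observe that the vertices of the polygon $\CH{S}$ are exactly the extreme points of $S$, not all of $S$ in general. So let $S' \subseteq S$ denote these extreme points, giving $\CH{S} = \CH{S'}$ and reducing the claim to showing that $\CH{S'}$ is sliceable, i.e.\ convex and containing no Voronoi vertex of $\VD{S'}$. Convexity is immediate since $\CH{S'}$ is a convex hull.

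Next, since $\D$ is convex and $S' \subseteq \V \subseteq \D$, the convex hull $\CH{S'}$ is contained in $\D$. By the sliceability of $\D$, $\D$ contains no Voronoi vertex of $\VD{\V}$, so Theorem~\ref{theorem:sliceable} (applied to the subset $S' \subseteq \V$) gives that $\D$ contains no Voronoi vertex of $\VD{S'}$. In particular, the subregion $\CH{S'} \subseteq \D$ contains no Voronoi vertex of $\VD{S'}$ either, which is precisely the remaining condition for $\CH{S'}$ to be sliceable.

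There is essentially no hard step here; the main thing to be careful about is the definition of sliceability, which refers to the Voronoi diagram of the polygon's own vertices. One must therefore pass to the extreme subset $S'$ before invoking Theorem~\ref{theorem:sliceable}; once this is done the containment $\CH{S'} \subseteq \D$ and the theorem do all the work.
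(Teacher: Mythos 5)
Your proof is correct and follows essentially the same route as the paper's: apply Theorem~\ref{theorem:sliceable} to the subset and then use the containment $\CH{S} \subseteq \D$. Your extra step of passing to the extreme points $S'$ of $S$ (since those are the actual vertices of $\CH{S}$) is a small precision the paper's one-line proof glosses over, but it does not change the argument in any substantive way.
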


\begin{proof}
Since $\D$ contains no Voronoi vertex of $\VD{S}$ for any $S \subseteq \V$ by Theorem~\ref{theorem:sliceable}, neither does any subset of $\D$, including the convex hull of $S$. 
\end{proof}

% --- dynamic programming algorithm --- % 
\subsection{Dynamic Programming Algorithm} 

Let $\D$ be a sliceable polygon. The ordering of its vertices $v_1, v_2, \dots, v_n$ as defined in Lemma~\ref{lemma:ordering} permits a dynamic programming algorithm similar to the one used for $1$-dimensional polygons that solves PB on $\D$. 

Let $\optS(i,k)$ denote the minimum time to burn the subset of $\D$ from the bisector of $\overline{v_{i-1}v_{i}}$ onward given that $v_i$ is a burn site and $k$ sites remain to be chosen. 
\[
\optS(i,k) =
\begin{cases}
  \min_{i < j \leq n} \max\{d(v_j, p_{ij}), d(v_j, q_{ij}), \optS(j,k-1) \} &
  \text{if $k > 0$,}\\
  d(v_i, v_n) & \text{otherwise,}
\end{cases}
\]
where $p_{ij}$ and $q_{ij}$ represent the intersections of the bisector of $\overline{v_i v_j}$ with $\partial P$. It follows that the minimum time to burn $\D$ using $k$ sites is 
\[
\opt{k}{\D}  =
\begin{cases}
\min_{i \in [n]} \max\{ d(v_1,v_i) , \optS(i,k-1)\} & \text{if $k>0$}\\
\infty & \text{otherwise.}
\end{cases} 
\]

\begin{figure}
\centering
\includegraphics[width=\textwidth]{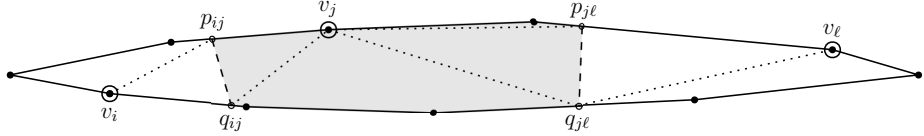}
\caption{Region $P_j$ (shaded) induced by sites $v_i$, $v_j$, and $v_{\ell}$ (circled), overlaid with the distances considered by algorithm (dotted).}
\label{figure:burn-regions-dp}
\end{figure}

\begin{theorem}
\label{theorem:dp-sliceable}
Using a dynamic programming algorithm, PB can be solved in $O(kn^2)$ time on a $n$-vertex sliceable polygon. 
\end{theorem}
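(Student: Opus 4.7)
The plan is to verify correctness of the recurrence and then analyze the tabulation. After relabeling vertices $v_1, \ldots, v_n$ according to the ordering of Lemma~\ref{lemma:ordering}, I would argue that any $k$-site set $S$, sorted as $v_{i_1} < \cdots < v_{i_k}$, induces a partition of $\D$ into consecutive slices $\D_{i_1}, \ldots, \D_{i_k}$ separated by the bisectors of successive pairs: Corollary~\ref{corollary:sliceable} makes the convex hull of $S$ sliceable, so Lemma~\ref{lemma:ordering} applied to $S$ yields the chain structure. Since $t_S(\D) = \max_t \max_{p \in \D_{i_t}} d(v_{i_t}, p)$ and earlier slices are unaffected by later site choices, the problem decomposes left-to-right in exactly the form captured by the recurrence.

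The main obstacle is the geometric claim that, for $t<k$, the burn time of $\D_{i_t}$ equals $\max\{d(v_{i_t}, p_{i_t i_{t+1}}), d(v_{i_t}, q_{i_t i_{t+1}})\}$, so only the forward bisector needs tracking. Since $\D_{i_t}$ is convex (an intersection of $\D$ with two halfplanes), its farthest point from $v_{i_t}$ is an extreme point: either a bisector-boundary intersection or a polygon vertex interior to the slice. Backward bisector endpoints satisfy $d(v_{i_t}, p_{i_{t-1} i_t}) = d(v_{i_{t-1}}, p_{i_{t-1} i_t})$ and are therefore already captured by the previous subproblem's forward term. For an interior polygon vertex $v_\ell$, I would invoke Theorem~\ref{theorem:sliceable} on $S' = \{v_{i_t}, v_{i_{t+1}}, v_\ell\}$: the circumcenter of $\triangle v_{i_t} v_{i_{t+1}} v_\ell$ then lies outside $\D$, forcing this triangle to be obtuse, and since $d(v_{i_t}, v_\ell) \leq d(v_{i_{t+1}}, v_\ell)$ (as $v_\ell$ lies on $v_{i_t}$'s side of the bisector of $v_{i_t}v_{i_{t+1}}$), the obtuse angle cannot be at $v_{i_{t+1}}$. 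A case analysis on obtuseness at $v_{i_t}$ versus at $v_\ell$, combined with the convex chord geometry of the bisector within $\D$, then yields the desired bound. The outer formula's $d(v_1, v_i)$ term handles the first slice by the symmetric argument at the opposite end of the chain, with $v_1$ serving as the farthest backward polygon vertex by the chain ordering.

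For the running time, the DP fills $O(kn)$ entries $\optS(i, k')$, each a minimum over $O(n)$ choices $j$ with $O(1)$ work per choice, given precomputed pair data $p_{ij}, q_{ij}$ and the distances $d(v_j, p_{ij}), d(v_j, q_{ij})$. Preprocessing all $O(n^2)$ pairs can be done in $O(n^2)$ time by fixing $i$ and sweeping $j$ in the chain order: the bisector of $v_iv_j$ varies continuously, and its two $\partial P$-intersections move monotonically around the convex boundary, allowing amortized $O(1)$ maintenance per pair. Summing gives the claimed $O(kn^2)$ bound.
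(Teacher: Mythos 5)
Your overall architecture matches the paper's: order the vertices by Lemma~\ref{lemma:ordering}, use Corollary~\ref{corollary:sliceable} to get the chain structure on any subset of sites, decompose $\D$ into consecutive slices bounded by bisectors, and observe that the farthest point of a convex slice from its site is a vertex of that slice --- either a bisector endpoint on $\partial\D$ or a vertex of $\D$. The bookkeeping about forward versus backward bisector endpoints and the $O(kn^2)$ tabulation are also fine (your sweep for precomputing the $p_{ij}, q_{ij}$ is more explicit than anything in the paper).

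The gap is in the one step that carries all the geometric content: ruling out a polygon vertex $v_\ell$ inside the slice as the farthest point. You correctly get that $\triangle v_{i_t} v_{i_{t+1}} v_\ell$ is obtuse (Theorem~\ref{theorem:sliceable}) and that the obtuse angle is not at $v_{i_{t+1}}$ (your side-length comparison is essentially the paper's observation that $v_{i_{t+1}}$ lies outside the circle with diameter $\overline{v_{i_t}v_\ell}$). But that leaves the obtuse angle at $v_{i_t}$ or at $v_\ell$, and ``combined with the convex chord geometry of the bisector within $\D$'' is not an argument: obtuseness of this triangle alone says nothing about how $d(v_{i_t}, v_\ell)$ compares to $d(v_{i_t}, p_{i_t i_{t+1}})$ and $d(v_{i_t}, q_{i_t i_{t+1}})$, because the triangle does not see where the bisector meets $\partial\D$. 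The paper instead argues by contradiction: assuming the circle $C$ centered at the site through $v_\ell$ contains the \emph{entire} slice (in particular both bisector chords), convexity of the slice at $v_\ell$ forces the angle at $v_\ell$ to be acute, and the chain-ordering property forces the angle at the site to be acute; together with your observation this makes the triangle acute, contradicting Corollary~\ref{corollary:sliceable}. Those two missing angle conditions --- especially the one at $v_\ell$, which is the only place the hypothesis ``$v_\ell$ is farther than the bisector endpoints'' is used --- are exactly what you would need to supply. A secondary omission: you only form the triangle with the forward site $v_{i_{t+1}}$, whereas a vertex $v_\ell$ with $v_{i_{t-1}} < v_\ell < v_{i_t}$ also lies in the slice and requires the symmetric triangle with the preceding site.
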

\begin{proof} 
(Sketch) We prove that 
the recurrence for $\optS(i,k)$ is correct by
showing that the maximum distance from burn site $v_j$ to a point in the region $P_j$ that is burnt by $v_j$ is correctly calculated in $\optS(i,k)$.
Let $v_i$ be the burn site preceding $v_j$, and $v_\ell$ be the burn site following $v_j$ in the vertex ordering.
The region $P_j$ is bounded by the perpendicular bisectors of segments $\overline{v_iv_j}$ and $\overline{v_jv_\ell}$ which intersect $P$ in segments $\overline{p_{ij}q_{ij}}$ and $\overline{p_{j\ell}q_{j\ell}}$ respectively (Figure~\ref{figure:burn-regions-dp}).
It suffices to show that the time to burn $P_j$ from $v_j$ is the larger of $\max\{d(v_j, p_{ij}), d(v_j, q_{ij})\}$, considered in $\optS(i,k)$, and $\max\{d(v_j, p_{j\ell}), d(v_j, q_{j\ell})\}$, considered in $\optS(j,k-1)$.
If no site precedes $v_j$ then the recurrence correctly uses
$d(v_1,v_j)$ instead of $\max\{d(v_j, p_{ij}), d(v_j, q_{ij}) \}$.
Likewise, if no site follows $v_j$ then the recurrence correctly uses
$d(v_j,v_n)$ instead of $\max\{d(v_j, p_{j\ell}), d(v_j, q_{j\ell}) \}$.

Suppose for the sake of contradiction that there exists a vertex $u$ of $P$ in $P_j$ such that the circle $C$ centered at $v_j$ through $u$ contains $P_j$.

First, $\angle v_j u v_i$ is acute since for $P_j$ to lie inside $C$, both edges of $P_j$ incident to $u$ must form acute angles with its radius $v_j u$.
Hence, since $P$ is convex, both the edge $uv_i$ and the edge $uv_\ell$ form an acute angle with $v_j u$.
Second, both $v_i$ and $v_\ell$ lie outside the circle 
with \emph{diameter} $v_j u$,
otherwise $u$ would be closer to $v_i$ or $v_\ell$ than to $v_j$ and hence not be burned by $v_j$. This implies that $\angle u v_i v_j$ is acute.
Finally, (i) if $v_i < u < v_j$ in the vertex ordering then $\angle v_i v_j u$ is acute, otherwise the perpendicular bisector of $uv_j$ would not separate $v_i$ from $v_j$ which violates the properties of the ordering.
Similarly, (ii) if $v_j < u < v_\ell$ then $\angle v_j v_\ell u$ is acute.

Combining these three observations, we have in case (i) that $\triangle v_i u v_j$ is acute and in case (ii) that $\triangle v_j u v_\ell$ is acute,
both of which contradict Corollary~\ref{corollary:sliceable}.
\end{proof}

%----------------
%   CONCLUSION
%----------------

\section{Conclusion}
\label{section:conclusion} 

In this paper, we proved PB to be NP-hard on general polygonal domains. Nevertheless, the hardness for simple and convex polygons remains open. In addition, we gave an $O(n^2 \log n + hkn \log n)$-time $3$-approximation algorithm for PB. Finally, we considered sliceable polygons on which we can obtain a dynamic programming solution for PB.
Avenues for future research are to improve the approximation algorithm, to expand the class of polygons solvable using dynamic programming, and to resolve the complexity of PB on simple polygons.

%------------------
%   BIBLIOGRAPHY
%------------------

\end{document}